\documentclass[conference]{IEEEtran}
\usepackage{multirow}
\usepackage{booktabs}
\usepackage{amsfonts}
\usepackage{placeins}
\usepackage{array}
\usepackage{amsmath,cases,amssymb}
\usepackage{amsmath}
\usepackage{mathtools}
\usepackage{graphicx}
\usepackage{subcaption}
\usepackage{cite}
\usepackage{epsfig}
\usepackage{todonotes}
\usepackage{url}
\usepackage{algorithm}
\usepackage{algorithmic}
\usepackage{epstopdf}
\usepackage{balance}
\usepackage{color}
\DeclareGraphicsExtensions{.eps,.pdf}

\usepackage{scalerel}
\usepackage{multicol}
\usepackage{amssymb}% http://ctan.org/pkg/amssymb
\usepackage{pifont}% http://ctan.org/pkg/pifont
\usepackage{cases}

\usepackage{scalerel}
\usepackage{orcidlink}

\let\labelindent\relax
\usepackage{enumitem}

\usepackage{mathbbol}
\usepackage{cite}
\usepackage[T1]{fontenc} % optional
\usepackage{amsthm} % format theorem (bold title, italic content)
\usepackage{eqparbox}
\usepackage{textcomp}
\usepackage[eulergreek]{sansmath}
\usepackage{color}
\usepackage[cmintegrals]{newtxmath}

\usepackage[open,openlevel=1]{bookmark}
\usepackage{ctable}
\usepackage{threeparttablex}
\usepackage{makecell}

\newtheorem{theorem}{Theorem}

\newtheorem{proposition}{Proposition}

\allowdisplaybreaks

\newcommand{\bseq}{\begin{subequations}}
\newcommand{\eseq}{\end{subequations}}
\newcommand{\baln}{\begin{align}}
\newcommand{\ealn}{\end{align}}
\newcommand{\balnd}{\begin{aligned}}
\newcommand{\ealnd}{\end{aligned}}
\newcommand{\beq}{\begin{equation}}
\newcommand{\eeq}{\end{equation}}
\newcommand{\beqn}{\begin{eqnarray}}
\newcommand{\eeqn}{\end{eqnarray}}
\newcommand{\beqno}{\begin{eqnarray*}}
\newcommand{\eeqno}{\end{eqnarray*}}
\newcommand{\bma}{\begin{displaymath}}
\newcommand{\ema}{\end{displaymath}}
\newcommand{\bnu}{\begin{enumerate}}
\newcommand{\enu}{\end{enumerate}}
\newcommand{\bce}{\begin{center}}
\newcommand{\ece}{\end{center}}
\newcommand{\btb}{\begin{tabular}}
\newcommand{\etb}{\end{tabular}}
\newcommand{\bieq}{\begin{IEEEeqnarray}}
\newcommand{\eieq}{\end{IEEEeqnarray}}

\newcommand{\st}{{\mathrm{s.t.}}}
\newcommand{\subnum}{\IEEEyessubnumber}

\makeatletter
\newcommand{\linebreakand}{%
\end{@IEEEauthorhalign}
\hfill\mbox{}\par
\mbox{}\hfill\begin{@IEEEauthorhalign}
}
\makeatother

\begin{document}
\title{A Joint JSCC-Resource Allocation Framework for QoS-Aware Semantic Communication in LEO Satellite-based EO Missions}
% \vspace{-3mm}

\author{
    \IEEEauthorblockN{Hung Nguyen-Kha, Ti Ti Nguyen, Vu Nguyen Ha, Eva Lagunas, Symeon Chatzinotas, and Bjorn Ottersten}

    \IEEEauthorblockA{\textit{Interdisciplinary Centre for Security, Reliability and Trust (SnT), University of Luxembourg, Luxembourg} }

    % \IEEEauthorblockN{Hung Nguyen-Kha\orcidlink{0000-0002-5956-4279}, \IEEEmembership{Graduate Student Member,~IEEE,} Vu Nguyen Ha\orcidlink{0000-0003-1325-3480}, \IEEEmembership{Senior Member,~IEEE,}\\ Eva Lagunas\orcidlink{0000-0002-9936-7245}, \IEEEmembership{Senior Member,~IEEE,} Symeon Chatzinotas\orcidlink{0000-0001-5122-0001}, \IEEEmembership{Fellow,~IEEE},\\ and Joel Grotz\orcidlink{0000-0002-4095-4015}}, \IEEEmembership{Senior Member,~IEEE }

%\IEEEcompsocitemizethanks{ 
%% This work has been supported by the Luxembourg National Research Fund (FNR) under the project INSTRUCT (IPBG19/14016225/INSTRUCT).
%This research was funded in whole, or in part, by the Luxembourg National Research Fund (FNR) through the Project INtegrated Satellite – TeRrestrial Systems for Ubiquitous Beyond 5G CommunicaTions (INSTRUCT) under Grant IPBG19/14016225/INSTRUCT. 
%The preliminary result of this manuscript was presented in IEEE VTC Fall'24 \cite{Hung_VTC24}.
%}

%\IEEEcompsocitemizethanks{H. Nguyen-Kha, V. N. Ha, E. Lagunas, and S. Chatzinotas are with the Interdisciplinary Centre for Security, Reliability and Trust (SnT), University of Luxembourg, 1855 Luxembourg Ville, Luxembourg.  (e-mail: khahung.nguyen@uni.lu; vu-nguyen.ha@uni.lu; eva.lagunas@uni.lu; Symeon.Chatzinotas@uni.lu).}
%
%\IEEEcompsocitemizethanks{J. Grotz is with SES, Chateau de Betzdorf, Betzdorf 6815, Luxembourg (e-mail: Joel.Grotz@ses.com).}
\vspace{-10mm}
}

\maketitle

\begin{abstract} 
In Earth observation (EO) missions with Low Earth orbit (LEO) satellites, high-resolution image acquisition generates a massive data volume that poses a significant challenge for transmission under the limited satellite power budget, while LEO movement introduces dynamic systems. To enable efficient image transmission, this paper employs semantic communication (SemCom) with joint source-channel coding (JSCC), which focuses on transmitting meaningful information to reduce power consumption. Under a quality-of-service (QoS) requirement defined by image reconstruction quality, this work aims to minimize the total transmit power by jointly optimizing the JSCC encoder-decoder parameters and resource allocation. However, the implicit relationship among JSCC parameters, link quality, and image quality, coupled with the presence of mixed integer-continuous variables, makes the problem difficult to solve directly. To address this, a curve-fitting model is proposed to approximate the JSCC compression-SNR-quality relationship. Then, the joint compression ratio-resource allocation (JCRRA) algorithm is proposed to address the underlying problem.
%by reformulating it in a continuous domain, followed by a rounding step to recover discrete solutions and successive convex approximation (SCA) to address the non-convexity.
Numerical results demonstrate that the proposed method achieves substantial power savings compared to both greedy algorithms and conventional transmission paradigms.
\end{abstract}
%\vspace{-1mm}
%\begin{IEEEkeywords}
%%\vspace{-.3cm}
%LEO Constellation, Integrated Satellite-Terrestrial Networks, Seamless Connectivity, C-Band, 5G Automotive, Resource Allocation. 
%\end{IEEEkeywords}

%\vspace{-1cm}

%\newpage

\vspace{-2mm}
\section{Introduction}
\vspace{-2mm}
Recent developments in low Earth orbit (LEO) satellite (LSAT) constellations have enabled a broad range of services, including broadband connectivity, IoT communications, and Earth observation (EO) missions (e.g., ICEYE, HawkEye) \cite{SHAKUN20211743}. In EO applications, LSATs capture high-resolution imagery for downstream tasks such as environmental monitoring, disaster response, and ecosystem preservation \cite{LeyvaMayorga_TCOM23_EO}. However, the substantial volume of image data generated poses a critical bottleneck for transmission, given the limited bandwidth and stringent power constraints inherent in satellite links. As such, efficient image transmission frameworks and intelligent resource allocation (RA) strategies are essential to maximize system performance in LSAT-based EO platforms.

While recent efforts have investigated RA in LSAT systems \cite{VuHaGC2022, Zhu_IoTJ23_SatCom, Choi_IoTJ24_SatCom, Hung_WSA23, hung_ICCW_2023twotier, Hung_TWC24, Hung_VTC24, Hung_MeditCom24, Hung_TCOM25, hung_GLOBECOM25}, these studies primarily consider conventional bit-oriented transmission schemes. They often overlook the potential of semantic-aware techniques that exploit the underlying meaning and relevance of EO data, which could lead to more power-efficient and application-aligned communication paradigms.
In this context, semantic communication (SemCom) has emerged as a promising transmission paradigm, particularly well-suited for resource-constrained satellite systems. Unlike conventional communication frameworks transmitting raw or redundant information, SemCom focuses on conveying task-relevant and meaningful content \cite{Yang_ComST23_Semantic}. For many EO missions where objectives may not require full-resolution images but rather specific semantic cues - SemCom enables the transmission of essential content, thereby significantly reducing bandwidth and power consumption in LSAT systems.

%Recent work in \cite{Loc_TWC25_SemCom} explores the application of SemCom in satellite-based EO scenarios, employing separate learning-based source encoding, channel encoding, and denoising modules. However, the modular design of these components hinders the full exploitation of semantic redundancy and cross-layer optimization potential. 
To address this limitation, joint source-channel coding (JSCC) has been proposed as an effective SemCom technique tailored for data-type-aware transmission \cite{Gunduz_TCCN19_DJSCC, Ding_TWC24_SemCom, Hu_TCSVT2025_SemCom,Loc_TWC25_SemCom}. In particular, JSCC leverages deep learning architectures to directly map input data into channel symbols—bypassing the need for traditional source coding, channel coding, and modulation—and reconstructs semantic representations at the receiver end. This end-to-end approach enables more robust and efficient communication, especially under low SNR and dynamic channel conditions.
JSCC has garnered significant attention and has been extensively applied to image transmission within the SemCom framework. For instance, in \cite{Hu_TCSVT2025_SemCom}, the authors focused on optimizing computational resource utilization in SemCom-based image transmission. Meanwhile, the integration of SemCom into 5G communication systems has been explored in \cite{Ding_TWC24_SemCom, Zhang_JSAC25_SemCom}, where the coexistence of conventional bit-level and semantic transmission was addressed. Specifically, \cite{Ding_TWC24_SemCom} considers the joint support of SemCom and low-latency services, whereas \cite{Zhang_JSAC25_SemCom} investigates semantic-bit coexistence in a 5G multiple-input single-output (MISO) system. While these contributions provide valuable insights into the application of JSCC in terrestrial 5G contexts, they primarily target generic 5G-oriented SemCom scenarios.

To the best of our knowledge, the integration of JSCC in LSAT-based EO systems remains largely unexplored, highlighting a significant research gap in the literature.
In this paper, we study the LSAT EO systems employing the JSCC transmission paradigm. Under QoS requirement in terms of image reconstruction quality and limited LSAT's power budget, the objective is to minimize the total downlink transmit power. The main contributions are summarized as follows:
\begin{itemize}
    \item We study semantic satellite communication systems leveraging JSCC for practical EO applications and develop an optimization framework for power minimization through joint JSCC compression selection and RA under QoS and power constraints. %However, solving this problem is challenging due to the coupling between integer and continuous variables, as well as the lack of a closed-form expression linking image reconstruction quality to transmission parameters.
    \item To address the lack of a closed-form expression linking image reconstruction quality to transmission parameters, we propose an approximation model that captures the relationship among image quality, compression ratio, and signal-to-noise ratio (SNR) using a curve-fitting approach, thereby transforming the original problem into a more tractable form.
    \item To efficiently address the non-convex transformed problem, we design an iterative algorithm based on the successive convex approximation (SCA) method. In addition, two greedy algorithms--one using JSCC and the other using traditional JPEG compression--are considered for performance comparison.
    \item Numerical results, evaluated using a public dataset and realistic simulation parameters, demonstrate the superiority of the proposed algorithm over the benchmark schemes in terms of power minimization. Moreover, they highlight the effectiveness of JSCC for satellite EO systems compared with conventional transmission methods.
\end{itemize}

% ===========================================
\begin{figure}
    \centering
    \captionsetup{font=small}
    \includegraphics[width=0.75\columnwidth]{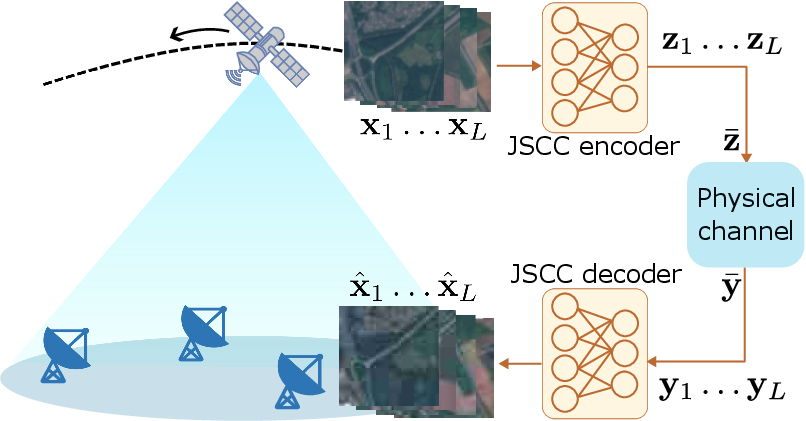}
    \caption{Satellite Earth-observation system model.}
    \label{fig:system_model}
    \vspace{-3mm}
\end{figure}
\vspace{-2mm}
\section{System Model and Problem Formulation}
\vspace{-2mm}
This work considers a semantic LSAT communication system consisting of one LSAT serving $K$ users (UEs) within a time window of $T$ time-slots (TSs) with TS duration of $\tau$ as depicted in Fig.~\ref{fig:system_model}. 
In this SemCom system, we focus on the image reconstruction task where the deep joint source channel encoder/decoder \cite{Gunduz_TCCN19_DJSCC} is employed.
Particularly, the LSAT employ the encoder to extract and encode semantic feature from captured images before transmitting them to the corresponding UEs. 
At the UE side, UE employed decoder to decode received semantic features and reconstruct the corresponding images.

Denote $\mathcal{L}_{k}=\{1, \dots, L_{k}\}$ as the set of $L_{k}$ images captured by the LSAT, which will be encoded and transmitted to user $k$.
For the transmission, the total system bandwidth (BW) $B_{\sf{total}}$ is divided into $S$ sub-channels (SCs), the BW of each is $B = B_{\sf{total}} / S$. Let $\mathcal{K} = \{1,\dots,K\}$, $\mathcal{S} = \{1,\dots,S\}$, and $\mathcal{T} = \{1,\dots,T\}$ be the UE, sub-channel, and TS sets; and let ${\sf{UE}}_{k}$ and ${\sf{SC}}_{s}$ denote UE $k$ and SC $s$, respectively.

\subsection{LEO-UE Association Model}
To describe the UE association, a binary variable $\boldsymbol{\alpha} = [\alpha_{k,s,t}]_{\forall (k,s,t)}$ is introducted and defined as
\begin{equation}
    \alpha_{k,s,t} = \begin{cases}
        1 & \text{ if LSAT serves } {\sf{UE}}_{k} \text{ over SC } s \text{ at TS } t, \\
        0 & \text{ otherwise.}
    \end{cases}
\end{equation}
To ensure orthogonality, each SC should be allocated to at most one UE each time, yieding 
\begin{equation}
    (C1): \quad \scaleobj{.9}{\sum_{\forall k \in \mathcal{K}} } \alpha_{k,s,t} \leq 1, \quad \forall (s,t) \in \mathcal{S} \times \mathcal{T}. \nonumber
\end{equation}
Additionally, one assumes that each UE can be allocated at most one SC, which is ensured by constraint
\begin{equation}
    (C2): \quad \scaleobj{.9}{\sum_{\forall s \in \mathcal{S}} }\alpha_{k,s,t} \leq 1, \quad \forall (k,t) \in \mathcal{K} \times \mathcal{T}. \nonumber
\end{equation}
To ensure that all UEs are served, the following constraint must be guaranteed
\begin{equation}
    (C3): \quad \scaleobj{.9}{\sum_{\forall (s,t) \in \mathcal{S} \times \mathcal{T}}} \alpha_{k,s,t} \geq 1, \quad \forall k \in \mathcal{K}. \nonumber
\end{equation}

\subsection{SemCom Model}
Let $\mathbf{x}_{k,\ell} \in \mathbb{R}^{N}$ be the $\ell-$th image corresponding to ${\sf{UE}}_{k}$. At the satellite, with input $\mathbf{x}_{k,\ell}$, the encoder extracts and encodes the semantic feature into the symbol $\mathbf{z}_{k,\ell} \in \mathbb{C}^{M}$ with $M < N$. Particularly, let $E(\cdot \vert \rho)$ be the encoder function, where $\rho = N / M$ is the compression ratio parameter. Additionally, we assume that the encoder/decoder has $V$ compression options. The compression option set with the ascending order and option $v$ are denoted by $\Pi = \{\pi_{1},\dots,\pi_{V}\}$ and $\pi_{v}$, respectively.
The output symbol for ${\sf{UE}}_{k}$ is given by 
\begin{equation}
    \mathbf{z}_{k,\ell} = E(\mathbf{x}_{k,\ell} \vert \rho_{k}), \quad \forall (k,\ell) \in \mathcal{K} \times \mathcal{L}_{k},    
\end{equation}
where $\rho_{k} \in \Pi$ is the encoder/decoder parameter decision for ${\sf{UE}}_{k}$.
For convenience, let us denote $\boldsymbol{\rho} = [\rho_{k}]_{\forall k \in \mathcal{K}}$.

Subsequently, $\mathbf{z}_{k,\ell}$ is prepared to be transmitted over wireless environment to ${\sf{UE}}_{k}$. 
Let $\bar{\mathbf{z}}_{k,t}$ be the transmitted symbol for ${\sf{UE}}_{k}$ at TS $t$ which is constructed from $\{\mathbf{z}_{k,\ell}\}_{\forall \ell}$ and normalized as $\mathbb{E}[\vert \bar{\mathbf{z}}_{k,t} \vert^2] = 1$.
The received symbol at ${\sf{UE}}_{k}$ at TS $t$ over SC $s$ is expressed as
\begin{equation}
    \bar{\mathbf{y}}_{k,s,t} =  \sqrt{p_{k,s,t}} \bar{h}_{k,s,t} \bar{\mathbf{z}}_{k,t} + \mathbf{n}_{k,s,t},
    % \bar{\mathbf{y}}_{k,s,t} =  \sqrt{\alpha_{k,s,t} p_{k,s,t}} \bar{h}_{k,s,t} \bar{\mathbf{z}}_{k,t} + \mathbf{n}_{k,s,t},
\end{equation}
where $\sqrt{p_{k,s,t}}$ and $\bar{h}_{k,s,t}$ are the transmit power and channel coefficient from LSAT to ${\sf{UE}}_{k}$ at TS $t$ over SC $s$, and $n_{k,s,t} \sim \mathcal{CN}(0, \sigma_{k}^2 \mathbf{I})$ is the additive white Gaussian noise (AWGN). The corresponding SNR is expressed as
\begin{equation}
    \gamma_{k,s,t}(\boldsymbol{p}) = p_{k,s,t} h_{k,s,t} / \sigma_{k}^2,
    % \gamma_{k,s,t}(\boldsymbol{p}) = \frac{\alpha_{k,s,t} p_{k,s,t} h_{k,s,t} }{\sigma_{k}^2},
\end{equation}
where $\boldsymbol{p} = [p_{k,s,t}]_{\forall (k,s,t)}$ and $h_{k,s,t} = \vert \bar{h}_{k,s,t} \vert^2$. 
Additionally, regarding limited on-board power budget, denoted by $p_{\sf{max}}$, and user association, the transmit power must satisfy
\begin{IEEEeqnarray}{ll}
    (C4): \quad \scaleobj{.9}{\sum_{\forall k \in \mathcal{K} } \sum_{\forall s \in \mathcal{S}}} p_{k,s,t} \leq p_{\sf{max}}, \quad \forall t, \nonumber \\
    % (C4): \quad \sum_{\forall k \in \mathcal{K} } \sum_{\forall s \in \mathcal{S}} \alpha_{k,s,t} p_{k,s,t} \leq p_{\sf{max}}, \quad \forall t. \nonumber \\
    (C5): \quad p_{k,s,t} \leq \alpha_{k,s,t} p_{\sf{max}}, \quad \forall (k,s,t). \nonumber
 \end{IEEEeqnarray}

Once $\mathbf{z}_{k,\ell}$ is transmitted entirely, the received symbols $\bar{\mathbf{y}}_{k,s,t}$ corresponding to $\mathbf{z}_{k,\ell}$ are aggregated as symbol $\mathbf{y}_{k,\ell} \in \mathbb{C}^{M}$. Subsequently, ${\sf{UE}}_{k}$ employs decoder to decode and reconstruct the corresponding image. Let $E^{-1}(\cdot \vert \rho)$ be the decoder function with parameter $\rho$, the reconstructed image is given by
\begin{equation}
    \hat{\mathbf{x}}_{k,\ell} = E^{-1}(\mathbf{y}_{k,\ell} \vert \rho_{k}).
\end{equation}
Hereafter, the reconstructed image quality is quantified by the structural similarity index metric (SSIM). Let ${\sf{SSIM}}(\mathbf{x}, \hat{\mathbf{x}})$ be the SSIM between reference image $\mathbf{x}$ and reconstructed image $\hat{\mathbf{x}}$.
% Let $f_{\sf{Q}}(\mathbf{x}, \hat{\mathbf{x}} \vert \rho, \eta)$ be the SSIM metric function with reference image $\mathbf{x}$ and reconstructed image $\hat{\mathbf{x}}$ inputs with respect to the given encoder/decoder and parameter $(\rho, \eta)$. 
Regarding QoS requirement, one assumes that ${\sf{UE}}_{k}$ requires an expected SSIM of $Q_{k}$, which is ensured as
\begin{equation}
    (C6): \quad \mathbb{E}_{\ell}[ {\sf{SSIM}}(\mathbf{x}_{k,\ell}, \hat{\mathbf{x}}_{k,\ell}) ] \geq Q_{k}, \quad \forall k \in \mathcal{K}. \nonumber
\end{equation}

\subsection{Transmisison Model}
\vspace{-2mm}
At the LSAT, $L_{k}$ images $\{\mathbf{x}_{k,\ell}\}_{\forall \ell}$ are encoded as $L_{k}$ encoded symbol sequence $\{\mathbf{z}_{k,\ell}\}_{\forall \ell}$. Hence, with a compression ratio $\rho_{k}$, there are $L_{k} N / \rho_{k} $ symbols must be transmitted to ${\sf{UE}}_{k}$. Assuming that the trasmitter and receiver hardwares are ideal with zero roll-off factor, the achievable symbol rate is $1$~symbol/s/Hz. Hence, the required number of TSs to transmit all encoded symbols to ${\sf{UE}}_{k}$ can be given by
\vspace{-2mm}
\begin{equation}
    T_{k}(\boldsymbol{\rho}) = {L_{k} N }/{(\rho_{k} B \tau)}.
\end{equation}

Regarding delay requirement, one assumes that encoded symbols of ${\sf{UE}}_{k}$ must be transmitted within $D_{k}$ TSs, which leads to convex constraint
\vspace{-2mm}
\begin{equation}
    (C7): \quad \scaleobj{.9}{\sum_{t=1}^{D_{k}} \sum_{\forall s \in \mathcal{S}}} \alpha_{k,s,t} \geq T_{k}(\boldsymbol{\rho}), \quad \forall k \in \mathcal{K}. \nonumber
\end{equation}

\subsection{Problem Formulation}
Regarding the limited LSAT power, this work aims to jointly optimize UE association, power control, and SemCom JSCC encoder/decoder parameter decisions for minimizing the transmit power at LSAT while meeting QoS of SSIM requirement at the users. Particularly, given the inherent LSAT movement and the time-varying channel, the images should be encoded and transmitted appropriately and the encoded symbols should be transmitted at appropriate TSs instead of immediate transmission. The optimization problem can be mathematically formulated as
\vspace{-2mm}
\begin{IEEEeqnarray}{ll}
    (\mathcal{P}_{0}): \quad \min_{\boldsymbol{p}, \boldsymbol{\alpha}, \boldsymbol{\rho}} \quad & F_{\sf{obj}}(\boldsymbol{p}) \triangleq \scaleobj{.9}{\sum\nolimits_{\forall (k,s,t)}} p_{k,s,t}  \quad \st \; (C1)-(C7), \nonumber \\
    & (C0): \rho_{k} \in \Pi, \alpha_{k,s,t} \in \{0,1\}, \quad \forall (k,s,t),\nonumber
\end{IEEEeqnarray}
with $\boldsymbol{\alpha} = [\alpha_{k,s,t}]_{\forall (k,s,t)}$. Problem $(\mathcal{P}_{0})$ is challenging to solve due to the coupling between discrete and continuous variables, non-convex functions, expected SSIM requirement in $(C4)$, and especially the lack of analytical relationship among SSIM metric, compresion ratio, and SNR.

% ===========================================
\section{Proposed Solution}
\begin{figure}
    \centering
    \captionsetup{font=small}
    \includegraphics[width=1\linewidth]{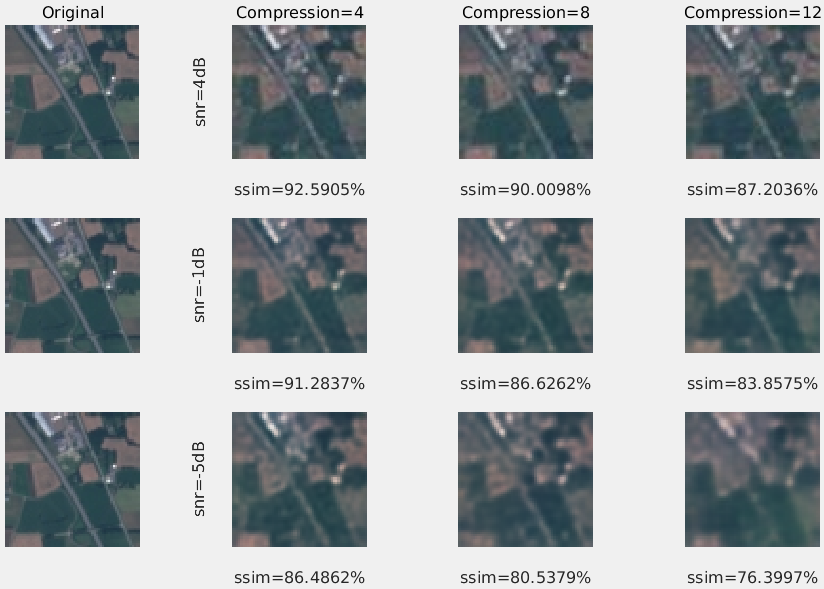}
    \caption{Reconstruction quality versus compression ratio and SNR.}
    \label{fig:reconstruction_example}
    \vspace{-0.2cm}
\end{figure}
\begin{figure}
    \centering
    \captionsetup{font=small}
    \includegraphics[width=1\columnwidth]{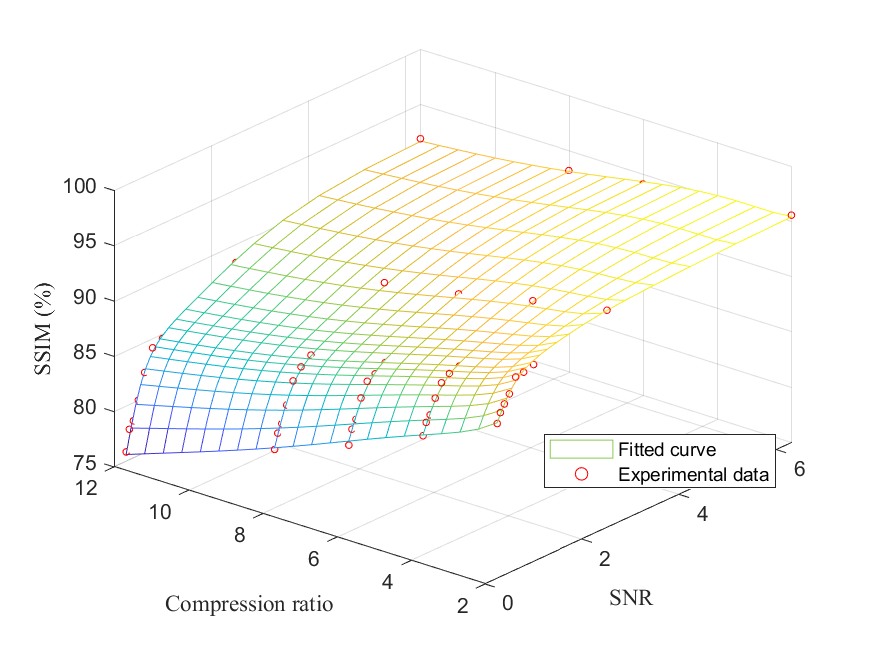}
    \vspace{-0.5cm}
    \caption{Curve-Fitting-based Approximation.}
    \label{fig:curve_fitting}
    \vspace{-0.2cm}
\end{figure}

\begin{table}
    \captionsetup{font=small}
    \centering
    \caption{Curve fitting coefficients.}
    \label{tab:curvefit}
    \scalebox{0.8}{
    \begin{tabular}{|l|c|c|c|c|c|c|}
         \hline
        $j$ & $a_{0,j}$ & $a_{1,j}$ & $a_{2,j}$ & $a_{3,j}$ & $a_{4,j}$ & $\bar{a}$\\ 
        \cline{1-6}
        \hline
        1 & 126.216875 & -15.5512520 & -1.740186 & 19.306749 & -0.308751 & \multirow{6}{*}{30.429844}\\
        2 & 59.985122 & -6.605492 & -0.996148 & 1.831855 & 0.028218 &\\
        3 & 153.050377 & 4.196036 & -0.213359 & -7.217142 & -0.701829 & \\
        4 & -206.696783 & 9.487473 & -1.375091 & 3.405617 & -0.549287 & \\
        5 & -38.0243540 & 6.00383 & 0.663762 & 5.409884 & -0.339842 & \\
        6 & -15.4042140 & 6.49103 & 4.730506 & -1.064734 & 0.0851 & \\
         \hline
    \end{tabular}}
\end{table}

\subsection{Curve-Fitting-based Approximation}
In this work, the JSCC encoder and decoder proposed in \cite{Gunduz_TCCN19_DJSCC} are utilized where the EUROSAT dataset is employed for traning and validation \cite{EUROSAT_dataset}. Fig~.\ref{fig:reconstruction_example} shows an example of image reconstruction quality versus compression ratio and SNR levels.
According to trained results, expected SSIM is affected by both compression ratio $\rho$ and SNR $\gamma$ as depicted in Fig.~\ref{fig:curve_fitting}. 
To solve problem $\mathcal{P}_{0}$, the relationship among these metrics should be modeled mathematically.
By utilizing the curve-fitting method \cite{TiNguyen_ICMLCN2025, nguyen2026unifiedsemanticlossmodel}, the reconstruction quality in terms of expected SSIM can be approximated as a function of the compression ratio and transmission SNR as
\vspace{-2mm}
% \begin{equation}\label{eq: SSIM fitting}
%     f_{\sf{Q}}(\rho, \gamma) = a_{0} + a_{1} \log_{10}(a_{2} + a_{3} \gamma) + a_{4} \log_{10}(a_{5} + a_{6} \rho),
% \end{equation}
\begin{equation}\label{eq: SSIM fitting}
    f_{\sf{Q}}(\rho, \gamma) = \bar{a} + \scaleobj{.9}{\sum_{j=1}^{J}}(a_{0,j} \exp(a_{4,j} \rho) + a_{1,j} f_{{\sf{es}},j}(\rho, \gamma) ),
\end{equation}
where $f_{{\sf{es}},j}(\rho, \gamma) = \frac{\exp(a_{4,j}\rho)}{1 + \exp(-a_{2,j} \gamma - a_{3,j})}$ while $\bar{a}, a_{x,j}$, are the fitting coefficients expressed in Table~\ref{tab:curvefit}. The fitting model is shown in Fig.~\ref{fig:curve_fitting}.
Employing \eqref{eq: SSIM fitting}, problem $(\mathcal{P}_{0})$ can be approximated as
\begin{IEEEeqnarray}{lcl}
    (\mathcal{P}_{1}): &    \min_{\boldsymbol{p}, \boldsymbol{\alpha}, \boldsymbol{\rho}, \boldsymbol{\eta}} \quad & F_{\sf{obj}}(\boldsymbol{p}) \nonumber \\
    & \st \; & (C0)-(C5), (C7), \nonumber \\
    &&(\hat{C}6a): \quad  f_{\sf{Q}}(\rho_{k}, \eta_{k}) \geq Q_{k}, \quad \forall k \in \mathcal{K}, \nonumber \\
    &&(\hat{C}6b): \quad  \gamma_{k,s,t}(\boldsymbol{p}) \geq \alpha_{k,s,t} \eta_{k}, \quad \forall (k,s,t), \nonumber 
\end{IEEEeqnarray}
where $\boldsymbol{\eta} = [\eta_{k}]_{\forall k}$, $\eta_{k}$ is introduced as target SNR of ${\sf{UE}}_{k}$ which ensure quality constraint $(C6)$. Constraints $(\hat{C}6a)$ and $(\hat{C}6b)$ are the approximated form of $(C6)$. Particularly, $(\hat{C}6b)$ ensures that once ${\sf{UE}}_{k}$ is served at TS $t$ over SC $s$, the compression ratio $\rho_{k}$ and corresponding SNR $\gamma_{k,s,t}(\boldsymbol{p})$ guarantee the SSIM requirement. Problem $(\mathcal{P}1)$, however, is still challenging to solve due to non-convex constraint $(\hat{C}6)$ and coupling of discrete-continuous variables. Problem $(\mathcal{P}1)$ is simplified by the following theorem.

\begin{theorem} \label{theorem: transform prob}
    Problem $(\mathcal{P}_{1})$ can be transformed into an equivalent problem $(\mathcal{P}_{2})$ (has the same optimal solution) as
    \begin{IEEEeqnarray}{lcl}
        (\mathcal{P}_{2}): &    \min_{\boldsymbol{p}, \boldsymbol{\alpha}, \boldsymbol{\rho}} \quad & F_{\sf{obj}}(\boldsymbol{p})\nonumber \\
        & \st \; & (C0)-(C5), (C7), \nonumber \\
        && (\breve{C}6): \; f_{\sf{Q}}(\rho_{k}, \gamma_{k,s,t}(\boldsymbol{p})) \geq \alpha_{k,s,t} Q_{k}, \quad \forall (k,s,t), \nonumber
    \end{IEEEeqnarray}
\end{theorem}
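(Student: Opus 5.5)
The plan is to show that the feasible sets of $(\mathcal{P}_{1})$ and $(\mathcal{P}_{2})$, projected onto $(\boldsymbol{p},\boldsymbol{\alpha},\boldsymbol{\rho})$, coincide, or at least that each optimal point of one problem maps to a feasible point of the other with the same objective value. Since $F_{\sf{obj}}$ depends only on $\boldsymbol{p}$, this gives equality of optimal values and of optimal solutions. The key structural fact, read off from \eqref{eq: SSIM fitting} and Table~\ref{tab:curvefit}, is that for each fixed $\rho\in\Pi$ the map $\gamma\mapsto f_{\sf{Q}}(\rho,\gamma)$ is nondecreasing. This holds because $f_{\sf{Q}}$ depends on $\gamma$ only through the terms $a_{1,j}\exp(a_{4,j}\rho)\,\sigma(a_{2,j}\gamma+a_{3,j})$, where $\sigma$ is the logistic function. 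I would establish this monotonicity first, either by checking the sign of $\partial f_{\sf{Q}}/\partial\gamma=\sum_j a_{1,j}a_{2,j}e^{a_{4,j}\rho}\sigma'(\cdot)$ over the admissible range of $\rho\in\Pi$ and $\gamma\ge 0$, or by taking it as the modelling property of the fit (SSIM increases with SNR, as in Fig.~\ref{fig:curve_fitting}).

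\emph{($\mathcal{P}_{1}\Rightarrow\mathcal{P}_{2}$).} Take any feasible $(\boldsymbol{p},\boldsymbol{\alpha},\boldsymbol{\rho},\boldsymbol{\eta})$ and split on $\alpha_{k,s,t}$. If $\alpha_{k,s,t}=0$, then $(\breve{C}6)$ reads $f_{\sf{Q}}(\rho_k,\gamma_{k,s,t})\ge 0$. This holds provided the fitted SSIM is nonnegative on the operating range, which I would verify numerically or enforce by restricting to the fitted domain. If $\alpha_{k,s,t}=1$, then $(\hat{C}6b)$ gives $\gamma_{k,s,t}\ge\eta_k$, so monotonicity and $(\hat{C}6a)$ yield $f_{\sf{Q}}(\rho_k,\gamma_{k,s,t})\ge f_{\sf{Q}}(\rho_k,\eta_k)\ge Q_k$. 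Therefore $(\boldsymbol{p},\boldsymbol{\alpha},\boldsymbol{\rho})$ is feasible for $(\mathcal{P}_{2})$ with the same objective.

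\emph{($\mathcal{P}_{2}\Rightarrow\mathcal{P}_{1}$).} Given a feasible point of $(\mathcal{P}_{2})$, define $\eta_k=\min\{\gamma_{k,s,t}(\boldsymbol{p}):\alpha_{k,s,t}=1\}$. This set is nonempty by $(C3)$. Then $(\hat{C}6b)$ holds by construction, trivially so when $\alpha_{k,s,t}=0$ since $\gamma\ge 0$. Moreover, $(\hat{C}6a)$ holds because the minimizing $(s,t)$ has $\alpha_{k,s,t}=1$, so $(\breve{C}6)$ gives $f_{\sf{Q}}(\rho_k,\eta_k)\ge Q_k$. The objective is unchanged, so the two problems share optimal values and optimal $(\boldsymbol{p},\boldsymbol{\alpha},\boldsymbol{\rho})$.

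The main obstacle is the monotonicity of $f_{\sf{Q}}$ in $\gamma$. Because of the mixed signs in Table~\ref{tab:curvefit}, it is not automatic from the functional form. I would first try to bound the derivative's sign directly on the finite set $\Pi$, reducing the problem to finitely many one-dimensional checks in $\gamma$. If monotonicity fails somewhere, I would restrict $\gamma$ to the range where the fit is valid. A secondary issue is the nonnegativity of $f_{\sf{Q}}$ needed for the $\alpha_{k,s,t}=0$ case, which I would handle in the same way.
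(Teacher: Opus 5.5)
Your argument is correct, provided the two checks you flag go through. It is organized differently from the paper's proof.

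\textbf{The paper's route.} The paper never compares feasible sets; it argues by tightness at optima. At an optimum of $(\mathcal{P}_{1})$, the constraints $(\hat{C}6a)$ and $(\hat{C}6b)$ are active, since otherwise $p_{k,s,t}$ and $\eta_k$ could be lowered. At an optimum of $(\mathcal{P}_{2})$, the constraint $(\breve{C}6)$ is active. It then concludes that each optimal point is feasible for the other problem.

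\textbf{Your route.} You prove the stronger fact that the two feasible sets coincide after projection onto $(\boldsymbol{p},\boldsymbol{\alpha},\boldsymbol{\rho})$, which does not require optima to exist. Your witness $\eta_k=\min\{\gamma_{k,s,t}(\boldsymbol{p}):\alpha_{k,s,t}=1\}$ supplies exactly what the paper leaves implicit when it says the $(\mathcal{P}_{2})$-optimum ``is also feasible with $(\mathcal{P}_{1})$''. That direction uses no monotonicity at all.

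\textbf{Where the stronger statement costs you.} The cost falls on the direction $(\mathcal{P}_{1})\Rightarrow(\mathcal{P}_{2})$. At an arbitrary feasible point, $\gamma_{k,s,t}$ may exceed $\eta_k$, so you need $f_{\sf{Q}}(\rho,\cdot)$ to be nondecreasing. As you note, this is not automatic, since $a_{1,j}a_{2,j}<0$ for $j=3,4$, and the paper merely asserts it.

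\textbf{Removing the monotonicity requirement.} The paper's tightness idea lets you drop this assumption. Take a feasible point of $(\mathcal{P}_{1})$ with $\eta_k\ge 0$, the natural range of a target SNR, and set $p'_{k,s,t}=\alpha_{k,s,t}\eta_k\sigma_k^2/h_{k,s,t}\le p_{k,s,t}$. Then:
\begin{itemize}
\item the new point still satisfies $(C4)$, $(C5)$ and $(\hat{C}6b)$, and its objective is no larger;
\item $\gamma_{k,s,t}(\boldsymbol{p}')=\eta_k$ on every active link, so $(\breve{C}6)$ follows from $(\hat{C}6a)$ directly;
\item $(\boldsymbol{p}',\boldsymbol{\alpha},\boldsymbol{\rho},\boldsymbol{\eta})$ is again feasible for $(\mathcal{P}_{1})$, so any optimum of $(\mathcal{P}_{1})$ must already have $\boldsymbol{p}'=\boldsymbol{p}$.
\end{itemize}
Combined with your $\min$ construction, this gives equal optimal values and matching optimal $(\boldsymbol{p},\boldsymbol{\alpha},\boldsymbol{\rho})$ without any monotonicity.

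\textbf{Inactive links.} The remaining requirement is the one you identified: $f_{\sf{Q}}(\rho_k,0)\ge 0$ when $\alpha_{k,s,t}=0$, because $(C5)$ together with $p_{k,s,t}\ge 0$ forces $p_{k,s,t}=0$ there. The paper notes $p_{k,s,t}=0$ in this case but never checks that $(\breve{C}6)$ then holds, so on this point your treatment is more careful than the original.
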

\begin{IEEEproof}
    Please see Appendix~\ref{appendix: transform prob}
\end{IEEEproof}

\subsection{Discrete Relaxation and Problem Approximation}
In problem $(\mathcal{P}_{2})$, constraint $\breve{C}6$ is non-convex due to nonconvexity of $f_{\sf{Q}}(\rho,\gamma)$ and discrete variables $(\boldsymbol{\alpha}, \boldsymbol{\rho})$.  
To transform $(\mathcal{P}_{2})$ into a more traceable form, discrete variables $\boldsymbol{\alpha}$ and $\boldsymbol{\rho}$ are relaxed into a continuous ones as
\begin{IEEEeqnarray}{ll}
    (\tilde{C}0a): \quad 0 \leq \alpha_{k,s,t} \leq 1, \quad \forall (k,s,t), \nonumber \\
    (\tilde{C}0b): \quad \pi_{1} \leq \rho_{k} \leq \pi_{V}, \quad \forall k. \nonumber
\end{IEEEeqnarray}

%Subsequently, constraint $(\breve{C}6)$ is approximated by the following proposition.
\begin{proposition} \label{pro: convex C6}
    Constraint $(\breve{C}6)$ is convexified as
    \begin{IEEEeqnarray}{ll}
        (\tilde{C}6a): \; 1 + e^{-a_{2,j} \gamma_{k,s,t} - a_{3,j}} \leq u_{k,s,t,j},  \forall (k,s,t),\forall j \vert a_{1,j} \geq 0, \nonumber \\
        (\tilde{C}6b): \; \bar{a} + \scaleobj{.9}{\sum_{j=1}^{J}} \Big[ a_{0,j} \Big( \scaleobj{.9}{1_{\{a_{0,j} \geq 0\}} f_{\sf{exp}}^{(i)}(\rho_{k};a_{4,j},0)  + 1_{\{a_{0,j} < 0\}} e^{a_{4,j} \rho_{k}}} \Big) \nonumber \\
         \hspace{5mm}
        + a_{1,j} \Big( \scaleobj{.9}{1_{\{ a_{1,j} \geq 0 \}} f_{{\sf{es}},j}^{{\sf{low}}, (i)}\!\rho_{k}, u_{k,s,t,j})\! +\! 1_{\{ a_{1,j} < 0 \}} f_{{\sf{es}},j}^{{\sf{up}}, (i)}\!(\rho_{k}, u_{k,s,t,j})} \Big) \Big]  \nonumber \\
        \hspace{55mm} 
         \geq \alpha_{k,s,t} Q_{k}, \; \forall (k,s,t), \nonumber
    \end{IEEEeqnarray}
    where $\boldsymbol{u}=[u_{k,s,t,j}]$ is a new slack variable, $f_{{\sf{exp}}}^{(i)}(x;a,b) = a e^{a x^{(i)}+b}(x - x^{(i)} + 1/a)$, $f_{{\sf{es}},j}^{{\sf{low}},(i)}(\rho, u) =  e^{a_{4,j}\rho^{(i)}}/{u^{(i)}}(a_{4,j}\rho - u/u^{(i)} - a_{4,j}\rho^{(i)} + 2)$, and $f_{{\sf{es}},j}^{{\sf{up}},(i)}(\rho, \gamma) =  e^{a_{4,j}\rho} / (1 + f_{\sf{exp}}^{(i)}(\gamma; -a_{2,j}, -a_{3,j}))$.
\end{proposition}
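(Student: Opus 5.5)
The plan is to show that $(\tilde{C}6a)$–$(\tilde{C}6b)$ form a convex \emph{inner} approximation of $(\breve{C}6)$ around the current SCA point $(\rho^{(i)},\gamma^{(i)},u^{(i)})$. Concretely, for every $(k,s,t)$ I would replace the left-hand side of $(\breve{C}6)$, i.e.\ $f_{\sf{Q}}(\rho_k,\gamma_{k,s,t})$ from \eqref{eq: SSIM fitting}, by a \emph{concave} function $\underline{f}^{(i)}$ with three properties:
\begin{itemize}
\item[(i)] $\underline{f}^{(i)}\le f_{\sf{Q}}$ on the feasible domain;
\item[(ii)] $\underline{f}^{(i)}=f_{\sf{Q}}$ at the current iterate;
\item[(iii)] $\underline{f}^{(i)}$ is built from affine or concave pieces in $(\rho_k,\boldsymbol{p},\boldsymbol{u})$.
\end{itemize}
Since $\gamma_{k,s,t}(\boldsymbol{p})=p_{k,s,t}h_{k,s,t}/\sigma_k^2$ is linear in $\boldsymbol{p}$ and the right-hand side $\alpha_{k,s,t}Q_k$ is linear, ``concave $\ge$ linear'' is a convex constraint. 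Property (i) gives feasibility of the approximation's solutions for $(\breve{C}6)$, and property (ii) gives the usual SCA monotonicity. The work is a term-by-term treatment of the sum over $j$, split by the signs of $a_{0,j}$ and $a_{1,j}$.

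\emph{Exponential terms $a_{0,j}e^{a_{4,j}\rho}$.}
\begin{itemize}
\item If $a_{0,j}<0$, the term is already concave (a negative multiple of a convex function), so I keep it unchanged.
\item If $a_{0,j}\ge 0$, I use the tangent inequality $e^{a\rho}\ge e^{a\rho^{(i)}}\big(1+a(\rho-\rho^{(i)})\big)=a e^{a\rho^{(i)}}(\rho-\rho^{(i)}+1/a)=f_{\sf{exp}}^{(i)}(\rho;a,0)$. This holds because $e^{a\rho}$ is convex, and it is tight at $\rho^{(i)}$. Multiplying by $a_{0,j}\ge0$ preserves the inequality and gives an affine term.
\end{itemize}

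\emph{Sigmoid-type terms with $a_{1,j}\ge0$.} Here a lower bound on $f_{{\sf{es}},j}$ is needed. I introduce the slack $u_{k,s,t,j}\ge 1+e^{-a_{2,j}\gamma-a_{3,j}}$, which is constraint $(\tilde{C}6a)$. It is convex because its left-hand side is the exponential of an affine function of $p_{k,s,t}$. Since $u$ dominates the denominator, $e^{a_{4,j}\rho}/u\le f_{{\sf{es}},j}(\rho,\gamma)$. I then show that $g(\rho,u)=e^{c\rho}/u$ is jointly convex for $u>0$. Its Hessian has diagonal entries $c^2g$ and $2g/u^2$, off-diagonal entry $-cg/u$, and determinant $c^2g^2/u^2\ge0$. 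Hence $g$ lies above its first-order Taylor expansion at $(\rho^{(i)},u^{(i)})$:
\[
g\ge \tfrac{e^{c\rho^{(i)}}}{u^{(i)}}\Big(1+c(\rho-\rho^{(i)})-\tfrac{u-u^{(i)}}{u^{(i)}}\Big),
\]
which is exactly $f_{{\sf{es}},j}^{{\sf{low}},(i)}$. This bound is affine and tight at the iterate. Multiplying by $a_{1,j}\ge0$ keeps it a valid lower bound, so the term enters $(\tilde{C}6b)$ as an affine piece.

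\emph{Sigmoid-type terms with $a_{1,j}<0$.} Here an upper bound on $f_{{\sf{es}},j}$ that is convex is needed, so that its multiple by $a_{1,j}<0$ is concave. By convexity of the exponential, $e^{-a_{2,j}\gamma-a_{3,j}}\ge f_{\sf{exp}}^{(i)}(\gamma;-a_{2,j},-a_{3,j})$, the tangent at $\gamma^{(i)}$. Therefore the denominator is bounded below by the affine function $1+f_{\sf{exp}}^{(i)}$, and $f_{{\sf{es}},j}\le f_{{\sf{es}},j}^{{\sf{up}},(i)}$, with equality at $\gamma^{(i)}$. The same Hessian computation, with $u$ replaced by a positive affine function of $\gamma$, shows that $e^{a_{4,j}\rho}/(1+f_{\sf{exp}}^{(i)}(\gamma))$ is jointly convex: composing a convex function with an affine map preserves convexity. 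So $a_{1,j}f_{{\sf{es}},j}^{{\sf{up}},(i)}$ is concave and lies below the original term.

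\emph{Main obstacle and remaining steps.} I expect the delicate step to be the case $a_{1,j}<0$. There, the bound and its convexity require the affine denominator $1+f_{\sf{exp}}^{(i)}(\gamma)$ to stay positive. This denominator equals $1+e^{-a_{2,j}\gamma^{(i)}-a_{3,j}}\big(1-a_{2,j}(\gamma-\gamma^{(i)})\big)$, which can become nonpositive for large $\gamma$ when $a_{2,j}>0$. I would handle this by restricting the SCA step to the region where the denominator is positive, either via a trust-region or explicit linear constraint, and argue that this region contains the current iterate. Alternatively, I would note that the fitted coefficients keep it positive over the operating SNR range.

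The final step is bookkeeping:
\begin{enumerate}
\item Summing the per-$j$ bounds gives a concave function below $f_{\sf{Q}}$ that is tight at the iterate, which yields $(\tilde{C}6b)$.
\item Together with the relaxations $(\tilde{C}0a)$–$(\tilde{C}0b)$, all constraints are convex.
\item Any point satisfying $(\tilde{C}6a)$–$(\tilde{C}6b)$ satisfies $(\breve{C}6)$.
\item The previous iterate remains feasible, which guarantees a non-increasing objective.
\end{enumerate}
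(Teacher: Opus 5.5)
Your proposal is correct and follows essentially the same route as the paper's proof. The paper also bounds the fitted SSIM model term by term according to the signs of $a_{0,j}$ and $a_{1,j}$. It applies tangent (first-order) inequalities to the convex pieces $e^{ax+b}$ and $e^{a_{4,j}\rho}/u$. The slack $u$ absorbs the sigmoid denominator when $a_{1,j}\ge 0$, and a tangent lower bound on that denominator handles $a_{1,j}<0$.

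Several details in your version are left implicit in the paper: the Hessian check of joint convexity of $e^{c\rho}/u$, the convexity of the resulting upper bound, and the requirement that the affine denominator $1+e^{-a_{2,j}\gamma^{(i)}-a_{3,j}}(1-a_{2,j}(\gamma-\gamma^{(i)}))$ stay positive. Making them explicit strengthens the argument without changing it.
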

\begin{IEEEproof}
    Please see appendix~\ref{appendix: convex C6}
\end{IEEEproof}

Thanks to Proposition~\ref{pro: convex C6}, $(\mathcal{P}_{2})$ is transformed into 
\begin{IEEEeqnarray}{lcl}
    (\mathcal{P}_{3}): &    \quad \min_{\boldsymbol{p}, \boldsymbol{\alpha}, \boldsymbol{\rho}, \boldsymbol{u}} \quad & F_{\sf{obj}}(\boldsymbol{p}) \; \st \; (\tilde{C}0),(C1)-(C5), (\tilde{C}6), (C7). \nonumber
\end{IEEEeqnarray}
It is worth noting that solving $(\mathcal{P}_{3})$ returns continuous values of $\boldsymbol{\rho}$ and $\boldsymbol{\alpha}$. Then, their discrete values can be recovered as
\begin{IEEEeqnarray}{ll}
    \rho_{k} = \mathtt{argmin}_{\bar{\pi} \in \Pi} \vert \bar{\pi} - \rho_{k} \vert, \quad \forall k, \subnum \label{eq: recover compression} \\
    \alpha_{k,s,t} = 1 \text{ if } \alpha_{k,s,t} \geq 0.5 \text{ and } \alpha_{k,s,t} = 0 \text{ if } \alpha_{k,s,t} < 0.5. \quad\quad \subnum \label{eq: recover binary}
\end{IEEEeqnarray}
The proposed algorithm is summarized in Algorithm~\ref{alg: JCRRA}, namely joint compression ratio-RA \textbf{(JCRRA)}.

\begin{algorithm}[!t]
	\footnotesize
	\begin{algorithmic}[1]
		\captionsetup{font=small}
		\protect\caption{\textsc{JCRRA Algorithm for Solving Problem $(\mathcal{P}_{0})$}}
		\label{alg: JCRRA}
		% \long\def\algorithmicrequire{\textbf{Phase 1:}}
		% \REQUIRE
		\STATE Generate feasible point $(\boldsymbol{p}^{(i)}, \boldsymbol{\rho}^{(i)}, \boldsymbol{u}^{(i)})$ and set $i=1$
		\REPEAT
		\STATE Solve problem $(\mathcal{P}_{3})_{\kappa}$ to obtain $(\boldsymbol{p}^\star, \boldsymbol{\alpha}^\star, \boldsymbol{\rho}^\star, \boldsymbol{u}^\star)$.
        \STATE Update $(\boldsymbol{p}^{(i)}, \boldsymbol{\rho}^{(i)}, \boldsymbol{u}^{(i)}) = (\boldsymbol{p}^\star, \boldsymbol{\rho}^\star, \boldsymbol{u}^\star)$ and $i=i+1$
		\UNTIL Convergence
		\STATE Recovery compression solution $\boldsymbol{\rho}$ by \eqref{eq: recover compression} and repeat steps 2-5
		\STATE Recovery binary solution $\boldsymbol{\alpha}$ by \eqref{eq: recover binary} and repeat steps 2-5
		\STATE \textbf{Output:} The RA and compression solutions $(\boldsymbol{p}^{\star}, \boldsymbol{\alpha}^{\star}, \boldsymbol{\rho}^{\star})$
	\end{algorithmic}
	\normalsize 
\end{algorithm}

\subsection{Other Benchmarks}
For comparison purpose, two greedy algorithms for JSCC-based and JPEG-compression-based systems are proposed.
\subsubsection{JSCC-based Greedy Algorithm}
For a given compression ratio $\rho_{k} = \bar{\rho}$, user association is selected based on channel gain so that number of connections satisfies $(C7)$. Subsequently, the minimum target SNR level $\eta_{k}$ and transmit power are identified to satisfy $(\hat{C}6)$. This algorithm is described in Algorithm.~\ref{alg: Greedy JSCC}, namely \textbf{Greedy JSCC}.
\subsubsection{JPEG-Compression-based Greedy Algorithm}
Given SSIM requirement, images are compressed in JPEG format with a corresponding quality before transmisison. First, user association is selected based on channel gain in desending order. By utilizing the Shanon capacity fomular to compute data rate for transmisison of compressed image files and the waterfilling power allocation, the minimum transmit power is identified. This algorithm is described in Algorithm~\ref{alg: Greedy JPEG}, namely \textbf{Greedy JPEG}.

\begin{algorithm}[!t]
	\footnotesize
	\begin{algorithmic}[1]
		\captionsetup{font=small}
		\protect\caption{\textsc{Greedy JSCC Algorithm for Solving $(\mathcal{P}_{0})$}}
		\label{alg: Greedy JSCC}
		\STATE \textbf{Input:} Channel gain matrix $\mathbf{h}$, fixed compression $\bar{\rho}$, and quality $Q_{k}$
		\STATE \textbf{Initialize:} Set $\boldsymbol{\alpha} = \boldsymbol{0}$, number of required connections $\bar{L}_{k} = L_{k} N/(\bar{\rho} B \tau)$, and channel after TS $D_{k}$ be zero $\mathbf{h}(k,:,D_{k}+1) = 0$
        \REPEAT
            \STATE Find index of maximum element of $\mathbf{h}$: $(\hat{k}, \hat{s}, \hat{t})$
            \STATE Set $\alpha_{\hat{k}, \hat{s}, \hat{t}} = 1$, $\mathbf{h}(\hat{k},:,\hat{t}) = 0$, $\mathbf{h}(:,\hat{s},\hat{t}) = 0$, and $\bar{L}_{\hat{k}} = \bar{L}_{\hat{k}} - 1$
            \STATE \textbf{If} $\bar{L}_{\hat{k}} = 0$ \textbf{do} set $\mathbf{h}(\hat{k},:,:)=0$
        \UNTIL  $\mathbf{h} = \boldsymbol{0}$ or $\bar{L}_{k} = 0, \forall k$
        \STATE Identify target $\eta_{k}| f_{\sf{Q}}(\bar{\rho}, \eta_{k}) = Q_{k}$ and $p_{k,s,t} = \alpha_{k,s,t} \eta_{k}$
        \STATE \textbf{Output:} User association and transmit power solutions $\boldsymbol{\alpha}$ and $\boldsymbol{p}$
	\end{algorithmic}
	\normalsize 
\end{algorithm}

\begin{algorithm}[!t]
	\footnotesize
	\begin{algorithmic}[1]
		\captionsetup{font=small}
		\protect\caption{\textsc{Greedy JPEG Algorithm for Solving $(\mathcal{P}_{0})$}}
		\label{alg: Greedy JPEG}
		\STATE \textbf{Input:} Channel gain matrix $\mathbf{h}$ and quality $Q_{k}$
        \STATE \textbf{For each} $k$: Compress images in JPEG format satisfying SSIM $Q_{k}$
		\STATE Set $\boldsymbol{\alpha} = \boldsymbol{0}$ and channel after TS $D_{k}$ be zero $\mathbf{h}(k,:,D_{k}+1) = 0$
        \REPEAT
            \STATE Find index of maximum element of $\mathbf{h}$: $(\hat{k}, \hat{s}, \hat{t})$
            \STATE Set $\alpha_{\hat{k}, \hat{s}, \hat{t}} = 1$, $\mathbf{h}(\hat{k},:,\hat{t}) = 0$, $\mathbf{h}(:,\hat{s},\hat{t}) = 0$
        \UNTIL Channel gain matrix is zero $\mathbf{h} = \boldsymbol{0}$
        \FOR{Each $k$, initialize water level}
            \REPEAT
            \STATE Allocate power for selected links using waterfilling algorithm 
            \STATE Compare sum rate with total image size. Update water level
            \UNTIL $B \tau \sum_{\forall (i,j)} \log_{2}(1 + h_{k,i,j}p_{k,i,j} / \sigma_{k}) \!\! \approx \!\! \sum [\mathsf{image \; size \; of \; user \; k}]$
        \ENDFOR
        \STATE \textbf{Output:} User association and transmit power solutions $\boldsymbol{\alpha}$ and $\boldsymbol{p}$
	\end{algorithmic}
	\normalsize 
\end{algorithm}

% ===========================================
\vspace{-2mm}
\section{Numerical Results}
\vspace{-2mm}
\begin{table}[t]
	\caption{Simulation parameters.}
	\label{tab:parameter}
	\centering
    \scalebox{0.8}{
	\begin{tabular}{l|l}
		\hline
		Parameter & Value \\
		\hline\hline
        Downlink Frequency & $20$ GHz \\
        Bandwidth, $B_{\sf{total}}$   &  40 MHz \\
        Time-slot duration $\tau$ & $1$~second \\
        JSCC compression ratio options, $\Pi$ & $\{48/4, 48/5, \dots, 48/24 \}$ \\
        LSAT orbit                  & Starlink orbit at $550$ km \\
        LSAT antenna gain,    & 24 dBi \cite{3gpp.38.821} \\
		Noise power power density,	& -174 dBm/Hz \\
		LSAT power budget, $ p_{\sf{max}} $	& 17 dBW \\
        Dataset for experiment results & EUROSAT \cite{EUROSAT_dataset} \\
		Number of usesrs, sub-channels, time-slots, $(K,S,T) $ & $(5, 4, 100)$ \\
        Number of images for users, $[L_{k}]$  & $[60, 65, 70, 75, 80] \times 10^3$ \\
        SSIM requirement of users, $[Q_{k}]$  & $[80, 83, 85, 87, 90] $ \% \\
        Delay requirement of users, $[D_{k}]$  & $[65, 60, 55, 55, 45]$ time slots \\
		\hline		   				
	\end{tabular}
    }
	%	\vspace{-0.5pt}
\end{table}

In this section, the numerical results are studied to evaluate the proposed algorithms in various scenarios. The EUROSAT dataset \cite{EUROSAT_dataset} is used for LSAT EO simulation. The simulated system is examined in a time-window of $100$~seconds. The channel coefficients are generated by incorporating the satellite movement, channel fading, and terminal antenna gain. The key simulation parameters are described in Table~\ref{tab:parameter}.

\begin{figure}[!t]
    \centering
    \captionsetup{font=small}
    \includegraphics[width=0.9\columnwidth]{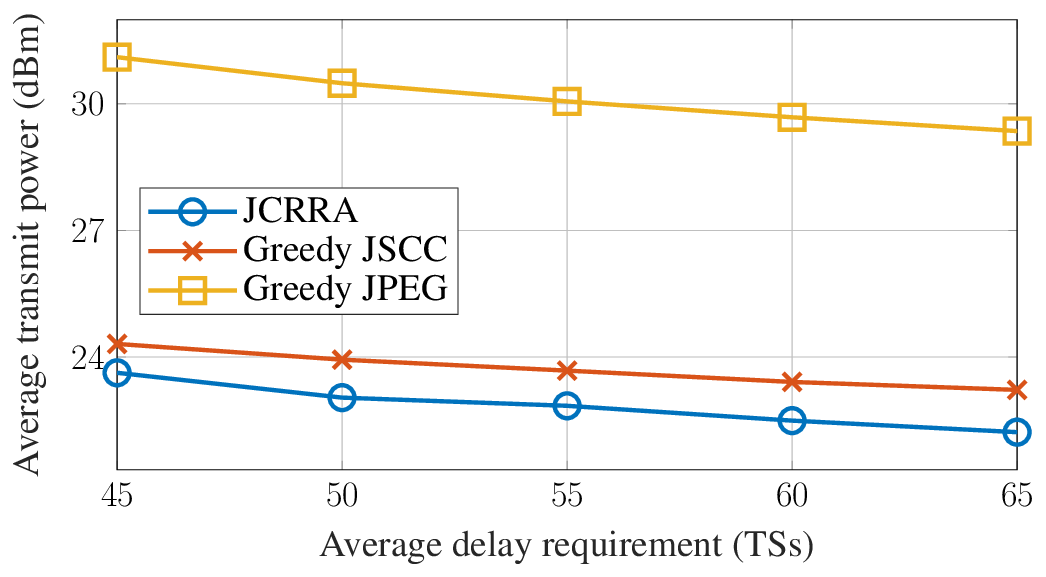}
    \vspace{-3mm}
    \caption{Average transmit power versus delay requirement.}
    \label{fig:P_delay}
    \vspace{-3mm}
\end{figure}
Fig.~\ref{fig:P_delay} shows the average transmit power versus the delay requirement. In general, to meet a stricter delay requirement, LSAT must transmit with higher power. In particular, when the average delay requirement increases $\bar{D} = 45 \rightarrow 65$ TSs, the transmit power of all three schemes decreases by approximately $1.2-1.8$ dBm. This phenomenon arises from the time-varying channel gain--affected by factors such as slant range and antenna beam gain--caused by the LSAT's orbital movement. Consequently, with a more relaxed delay requirement, the LSAT has greater flexibility to transmit symbols during TSs with higher channel gains, thereby reducing the required transmit power. Furthermore, compared with the conventional transmission method employing JPEG compression, the JSCC-based transmission demonstrates a significant power-saving advantage in EO applications. In particular, under similar greedy strategies, the performance gap between the Greedy JSCC and JPEG schemes is about $6$~dBm. Moreover, by jointly optimizing compression and RA, the proposed JCRRA algorithm achieves an additional power reduction of approximately $1.1$~dBm.

\begin{figure}[!t]
    \centering
    \captionsetup{font=small}
    \includegraphics[width=0.9\columnwidth]{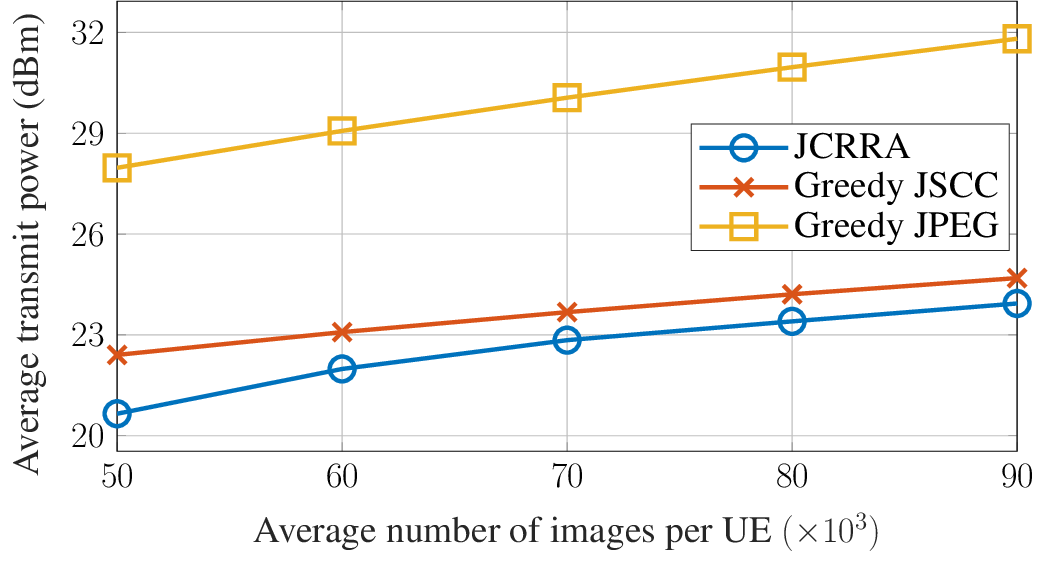}
    \vspace{-3mm}
    \caption{Average transmit power versus number of images per user.}
    \label{fig:P_length}
    \vspace{-3mm}
\end{figure}
Fig.~\ref{fig:P_length} depicts the average transmit power as a function of the mean number of demanded images per user. As expected, the LSAT requires higher total transmit power to accommodate a larger image demand. Particularly, In the Greedy JPEG scheme, the images are first compressed into bitstreams for transmission; thus, a larger demand results in a higher total data volume, which in turn necessitates a higher data rate and transmit power. For the JSCC-based schemes, an increase in the number of images necessitates more TSs to transmit all symbols, thereby resulting in more power consumption. By directly encoding images into transmission symbols, the JSCC method reduces significant power consumption. Specifically, comparing two greedy schemes, the JSCC approach achieves a power reduction of approximately $5-7$~dBm across the considered image demand scenarios. Furthermore, thanks to adaptive compression selection and RA, the proposed JCRRA algorithm achieves an additional power saving of about $0.7-1.8$~dBm compared with the Greedy JSCC scheme.

\begin{figure}[!t]
    \centering
    \captionsetup{font=small}
    \includegraphics[width=0.9\columnwidth]{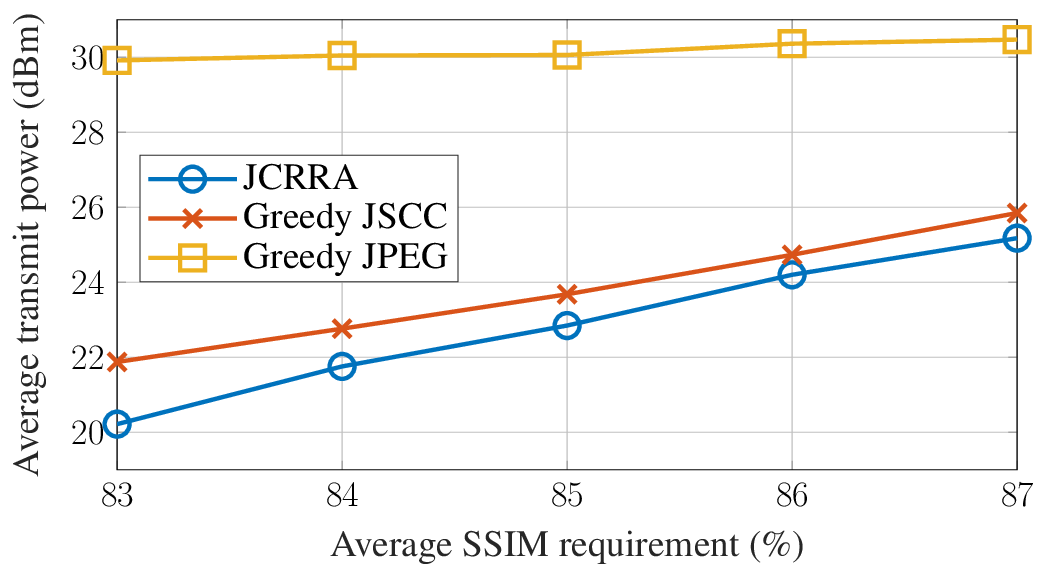}
    \vspace{-3mm}
    \caption{Average transmit power versus SSIM requirement.}
    \label{fig:P_ssim}
    \vspace{-3mm}
\end{figure}

Fig.~\ref{fig:P_ssim} illustrates the average transmit power under different mean SSIM requirements. As expected, a lower SSIM requirement corresponds to a lower required SNR level, thereby reducing transmit power consumption. However, in the Greedy JPEG scheme, the decrease in transmit power with respect to examined SSIM is relatively slow, since lowering the SSIM requirement only slightly reduces the compressed data size, i.e, decreasing SSIM requirement from $87$\% to $83$\% yields a power saving of less than $1$~dBm. In contrast, the JSCC-based schemes exhibit a more substantial reduction in power consumption, particularly at lower SSIM levels. This trend is consistent with the SSIM-compression-SNR relationship shown in Fig.~\ref{fig:curve_fitting}, where the required SNR drops sharply once SSIM falls below approximately $87$\%. Consequently, the power consumption gap between the JSCC and JPEG schemes becomes more pronounced at lower SSIM requirements.

% ===========================================
\vspace{-2mm}
\section{Conclusion}
\vspace{-2mm}
In conclusion, this work investigated a SemCom framework incorporating JSCC for LSAT EO systems. The proposed optimization framework was efficiently addressed through an iterative algorithm that integrates a curve-fitting model and the SCA approach. Numerical results demonstrate the effectiveness of employing JSCC in reducing the transmit power for satellite EO systems. In particular, the proposed algorithm consistently outperforms two benchmark schemes in terms of power minimization, highlighting the potential of JSCC-based SemCom as an energy-efficient solution for future satellite communication systems.

% ===========================================

\appendices
\vspace{-2mm}
\section{Proof of Theorem~\ref{theorem: transform prob}} \label{appendix: transform prob}
\vspace{-2mm}
First, one notes that $\gamma_{k,s,t}(\boldsymbol{p})$ and $f_{\sf{Q}}(\rho_{k}^{\star}, \eta_{k})$ are monotonic increasing with respect to $p_{k,s,t}$, respectively. Hence, at the optimal point of both problems, if $\alpha_{k,s,t}=0$, $p_{k,s,t}=0$ to minimize power.
Additionally, it can be seen that once $(\breve{C}6)$ (with $\alpha_{k,s,t}=1$) and $(\hat{C}6a)$, $(\hat{C}6b)$ hold equality at the optimal point of $(\mathcal{P_{2}})$ and $(\mathcal{P}_{1})$, respectively, two problems are equivalent, i.e., the optimal point of $(\mathcal{P_{1}})$ is a feasible point of $(\mathcal{P_{2}})$ and vice versa. 

Let's consider $(\mathcal{P}_{1})$ with an optimal point $(\boldsymbol{p}^{\star}, \boldsymbol{\alpha}^{\star}, \boldsymbol{\rho}^{\star}, \boldsymbol{\eta}^{\star})$. Assuming that $(\boldsymbol{p}^{\prime}, \boldsymbol{\alpha}^{\star}, \boldsymbol{\rho}^{\star}, \boldsymbol{\eta}^{\prime})$ is a feasible point of $(\mathcal{P}_{1})$, 
if $\exists (k,s,t) \vert f_{\sf{Q}}(\rho_{k}^{\star}, \eta_{k}^{\prime}) > Q_{k}$ and $\gamma_{k,s,t}(\boldsymbol{p}^{\prime}) > \alpha_{k,s,t}^{\star} \eta_{k}^{\prime}$, there exists a better solution $p_{k,s,t}^{\prime \prime} < p_{k,s,t}^{\prime}$ and $\eta_{k}^{\prime \prime} < \eta_{k}^{\prime}$ satisfying $(\hat{C}6a),(\hat{C}6b)$ with $F_{\sf{obj}}(\boldsymbol{p}^{\prime \prime}) < F_{\sf{obj}}(\boldsymbol{p}^{\prime})$. Hence, at optimal point $(\boldsymbol{p}^{\star}, \boldsymbol{\alpha}^{\star}, \boldsymbol{\rho}^{\star}, \boldsymbol{\eta}^{\star})$, $(\hat{C}6a)$ and $(\hat{C}6b)$ hold equality. Additionally, $(\boldsymbol{p}^{\star}, \boldsymbol{\alpha}^{\star}, \boldsymbol{\rho}^{\star})$ is a feasible point of $(\mathcal{P}_{2})$.

Similarly, let's consider $(\mathcal{P}_{2})$ with an optimal point $(\boldsymbol{p}^{\star\star}, \boldsymbol{\alpha}^{\star\star}, \boldsymbol{\rho}^{\star\star})$. Assuming that $(\boldsymbol{p}^{\prime}, \boldsymbol{\alpha}^{\star\star}, \boldsymbol{\rho}^{\star\star})$ is a feasible point, if $\exists (k,s,t) \vert f_{\sf{Q}}(\rho_{k}^{\star\star}, \gamma_{k,s,t}(\boldsymbol{p}^{\prime})) > \alpha_{k,s,t}^{\star\star} Q_{k}$, there exists $p_{k,s,t}^{\prime\prime} < p_{k,s,t}^{\prime}$ satisfying $(\breve{C}6)$ with $F_{\sf{obj}}(\boldsymbol{p^{\prime\prime}}) < F_{\sf{obj}}(\boldsymbol{p^{\prime}})$. Hence, at optimal point $(\boldsymbol{p}^{\star\star}, \boldsymbol{\alpha}^{\star\star}, \boldsymbol{\rho}^{\star\star})$, $(\breve{C}6)$ holds equality. Besides, $(\boldsymbol{p}^{\star\star}, \boldsymbol{\alpha}^{\star\star}, \boldsymbol{\rho}^{\star\star})$ is also feasible with $(\mathcal{P}_{1})$.

\vspace{-2mm}

\section{Proof of Proposition~\ref{pro: convex C6}} \label{appendix: convex C6}
To convexify $(\breve{C}6)$, one needs to find a lower bound of $f_{\sf{Q}}(\rho_{k},\gamma_{k,s,t})$. Hence, either the lower or upper bounds of components must be identified for positive or negative coefficients ($a_{0,j}$ and $a_{1,j}$), respectively. 
Applying the first-order approximation \cite{boyd2004convex} 
the lower bounds of $f_{\sf{exp}}(x;a,b) = e^{ax + b}$ and $\hat{f}_{{\sf{es}},j}(\rho, u) = e^{a_{4,j} \rho} / u$ can be expressed as \vspace{-2mm}
\begin{IEEEeqnarray}{ll}
    \hspace{-3mm} f_{{\sf{exp}}}(x;a,b) \!\! \geq \! \! \nabla \! f_{{\sf{exp}}}\vert_{x^{(i)}} \! 
    (\! x \!-\! x^{(i)} \!) \!+\! f_{\sf{exp}}(x^{(i)} \! ; a, \! b) \!\! = \!\!  f_{\sf{exp}}^{(i)}(x ; \! a, \!b), \label{eq: fexp0} \\
    \hspace{-3mm} \hat{f}_{{\sf{es}},j}(\rho,u) \geq f_{{\sf{es}},j}(\rho^{(i)},u^{(i)}) + [\nabla_{\rho} f_{{\sf{es}},j}, \nabla_{u} f_{{\sf{es}},j}] \vert_{(\rho^{(i)},u^{(i)})} \nonumber \\
    \hspace{26mm} \times [\rho - \rho^{(i)}, u - u^{(i)}]^{T} = f_{{\sf{es}},j}^{(i)}(\rho,u). \label{eq: fes0}
\end{IEEEeqnarray}
Regarding $e^{a_{4,j} \rho_{k}}$ term, when $a_{0,j} \geq 0$, the following is obtained by applying \eqref{eq: fexp0} as 
\begin{equation}
    e^{a_{4,j} \rho_{k}} \geq f_{\sf{exp}}^{(i)}(\rho_{k};a_{4,j},0).   
\end{equation}
For component $f_{{\sf{es}},j}(\rho_{k}, \gamma_{k,s,t})$ in case $a_{1,j} \geq 0$, with $\boldsymbol{u}$ satisfying $(\tilde{C}6a)$, a lower bound of $f_{\sf{Q}}(\rho_{k}, \gamma_{k,s,t})$ is obtained by applying \eqref{eq: fes0} as  \vspace{-2mm}
\begin{equation} \label{eq: fes low}
    \hspace{-2mm} f_{{\sf{es}},j}(\rho_{k}, \gamma_{k,s,t}) \! \geq \!\hat{f}_{{\sf{es}},j}(\rho_{k}, u_{k,s,t,j}) \! \geq \! f_{{\sf{es}},j}^{{\sf{low}},(i)}(\rho_{k},u_{k,s,t,j}). 
\end{equation}
% One can see that $f_{{\sf{apx}},j}(\rho,u)$ is a convex function over $\rho > 0$ and $u > 0$ since the corresponding Hessian matrix is positive definite. Apply first-order approximation \cite{boyd2004convex}, its lower bound is expressed as
% \begin{IEEEeqnarray}{ll} \label{eq: fapx}
%     f_{{\sf{apx}},j}(\rho,u) & \geq f_{{\sf{apx}},j}(\rho^{(i)},u^{(i)}) + [\nabla_{\rho} f_{{\sf{apx}},j}, \nabla_{u} f_{{\sf{apx}},j}] \vert_{(\rho^{(i)},u^{(i)})} \nonumber \\
%     & \quad \quad \times [\rho - \rho^{(i)}, u - u^{(i)}]^{T} = f_{{\sf{apx}},j}^{(i)}(\rho,u).
% \end{IEEEeqnarray}
When $a_{1,j} < 0$, employing \eqref{eq: fexp0} again for $e^{-a_{2,j} \gamma_{k,s,t} - a_{3,j}}$ terms in $f_{{\sf{es}},j}(\rho_{k}, \gamma_{k,s,t})$ yields \vspace{-2mm}
\begin{equation}
    e^{-a_{2,j} \gamma_{k,s,t} - a_{3,j}}
    \geq f_{\sf{exp}}^{(i)}(\gamma_{k,s,t}; -a_{2,j}, -a_{3,j}).
\end{equation}
Subsequently, upper bound $f_{{\sf{es}},j}^{{\sf{up}},(i)}(\rho_{k}, \gamma_{k,s,t})$ of $f_{{\sf{es}},j}(\rho_{k}, \gamma_{k,s,t})$ can be easily obtained. Finally,
% Similarly, a lower bound of $f_{{\sf{exp}},j}(\rho)= a_{0,j} \exp(a_{4,j}\rho)$ is 
% \begin{equation} \label{eq: fexp}
%     f_{{\sf{exp}},j}(\rho) \geq \nabla f_{{\sf{exp}},j}\vert_{\rho^{(i)}} (\rho - \rho^{(i)}) + f_{{\sf{exp}},j}(\rho^{(i)}) = f_{{\sf{exp}},j}^{(i)}(\rho).
% \end{equation}
applying these approximations to corresponding components of $f_{\sf{Q}}(\rho,\gamma)$, $(\breve{C}6)$ can be convexified as $(\tilde{C}6)$.

\vspace{-1mm}

\section*{Acknowledgment} \vspace{-1mm}
\small
This work was funded by the Luxembourg National Research Fund (FNR), with granted SENTRY project corresponding to grant reference C23/IS/18073708/SENTRY.
 \normalsize

%\vspace{-1mm}
\bibliographystyle{IEEEtran}
%\balance
\bibliography{Journal}
\end{document}